\documentclass[12pt]{article}
\usepackage{mathrsfs}
\usepackage{dsfont}
\usepackage{amsthm}
\usepackage{mathrsfs}
\usepackage{amsmath}
\usepackage{amsfonts}
\usepackage[colorlinks,linkcolor=black,anchorcolor=blue,citecolor=blue]{hyperref}
\usepackage{amssymb, amsmath, cite}

\setlength{\textwidth}{6.5truein} \setlength{\textheight}{9.3truein}
\setlength{\oddsidemargin}{-0.0in}
\setlength{\evensidemargin}{-0.0in}
\setlength{\topmargin}{-0.6truein}
\newtheorem{theorem}{Theorem}[section]
\newtheorem{lemma}{Lemma}[section]

\newcommand\be{\begin{equation}}
\newcommand\ee{\end{equation}}
\newcommand\ber{\begin{eqnarray}}
\newcommand\eer{\end{eqnarray}}
\newcommand\berr{\begin{eqnarray*}}
\newcommand\eerr{\end{eqnarray*}}
\newcommand\bea{\begin{eqnarray}}
\newcommand\eea{\end{eqnarray}}

\newcommand{\nn}{\nonumber}
\newcommand\vep{\varepsilon}\newcommand{\ii}{\mathrm{i}}
\newcommand{\dd}{\mathrm{d}}
\newcommand\e{\mathrm{e}}\newcommand\pa{\partial}
\newcommand\var{\vartheta}
\setlength{\baselineskip}{18pt}{\setlength\arraycolsep{2pt}

\begin{document}

\title{Twisted vortices in two--component Ginzburg--Landau theory}
\author{Lei Cao, Shouxin Chen\\School of Mathematics and Statistics, Henan University\\
Kaifeng, Henan 475004, PR China }
\date{}
\maketitle

\begin{abstract}

In this note, a brief introduction to the physical and mathematical background of the two--component Ginzburg--Landau theory is given. From this theory we derive a boundary value problem whose solution can be obtained in part by solving a minimization problem using the technique of variational method, except that two of the eight boundary conditions cannot be satisfied. To overcome the difficulty of recovering the full set of boundary conditions, we employ a variety of methods, including the uniform estimation method and the bounded monotonic theorem, which may be applied to other complicated vortex problems in gauge field theories. The twisted vortex solutions are obtained as energy--minimizing cylindrically symmetric field configurations. We also give the sharp asymptotic estimates for the twisted vortex solutions at the origin and infinity.

\end{abstract}

\medskip
\begin{enumerate}

\item[]
{Keywords:} twisted vortices, Ginzburg--Landau theory, variational method, existence, asymptotic estimates.

\item[]
{MSC numbers(2020):} 35J50, 81T10.

\end{enumerate}

\section{Introduction}\label{s0}
\setcounter{equation}{0}

A twisted vortex is a helical configuration of vortices that encircles a common axis. This phenomenon has been observed in superconducting current-carrying wires in the presence of a parallel external magnetic field, as documented in references \cite{Wa,Br,Ka}. Additionally, spontaneous emergence of twisted vortices has been reported in vortex lines expanding into vortex--free rotating superfluid, as corroborated by numerical calculations \cite{El}.

Ginzburg--Landau theory serves as a fundamental framework for understanding phase transitions \cite{Sc}, critical phenomena \cite{Mi,De}, and non--equilibrium patterns \cite{Ho} in systems ranging from superconductors \cite{Tr} to magnets \cite{Ros}. Multi--component Ginzburg--Landau theory \cite{As,Be,Le,Xu,Zh,Pe,Y,Ye,Fe} showcases its versatility and applicability in various physical phenomena, including superconductivity and vortex dynamics. In recent research, two--component Ginzburg--Landau theories \cite{Du,Ki,Ma} have been a topic of interest for the study of symmetry near the vortex core and are crucial for studying unconventional superconductors. Besides, research on two--component Ginzburg--Landau theory has provided valuable insights into other complex systems through the study of the conditions of validity, the applicability of the model, and the implications for different types of order parameters. In addition, when an external magnetic field is applied, two--component Ginzburg--Landau theories have been successfully used to describe type I.5 superconductors \cite{B,Ba,Ca,Si}, where the penetration depth of the magnetic field lies between the coherence lengths of the different order parameters. The microscopic derivation of two--component Ginzburg--Landau models and the application of variational methods further demonstrate the utility and versatility of this theory in investigating complex physical phenomena. Two--component Ginzburg--Landau models have also been applied to systems such as the $MgB_2$ materials \cite{Ro}, where a variational method is used to study superconductors. However, Babaev et al. \cite{Bab} question the validity of widely used two--component Ginzburg--Landau models, arguing against the claim that there should be no disparity in the coherence lengths of two superconducting components described by a Ginzburg--Landau theory. Furthermore, the Ginzburg--Landau theory has been extended to study skyrmions in inversion--symmetric magnets with competing interactions \cite{Li}. This general Ginzburg--Landau theory for skyrmions is valid in the long--wavelength limit, demonstrating the versatility of the Ginzburg--Landau formalism in describing a wide range of physical systems.

The present work aims to investigate vortices in the multi--component Ginzburg--Landau theory, which may be referred to as the Abelian gauge model with an extended scalar sector. In particular, we will concentrate on the two--component Ginzburg--Landau theory, which has the most general $U(1)\times U(1)$ symmetric scalar potential proposed by Witten \cite{Wi}. In the case of the potential minimum, where at least one of the fields assumes a non--zero value, the requisite details have not been provided thus far.

The remainder of this paper is organized as follows. Section 2 gives a brief review of the two--component Ginzburg--Landau theory, derives the field equations and the associated minimization problem, and states the main result. In Section 3, we demonstrate the existence of a solution to the energy--minimizing problem and verify the boundary conditions of the field equations. In Section 4, the decay estimate properties of the solution are established.

\section{Twisted vortices model and existence result}\label{s1}
\setcounter{equation}{0}

The Lagrangian of the two--component Ginzburg--Landau theory is
\be\label{1.1}
\mathcal{L}=\frac{1}{e^2}\left\{-\frac{1}{4}F_{\mu\nu}F^{\mu\nu}+(D_\mu\Phi)^\dag(D_\mu\Phi)-V(\Phi,\Phi^\dag)\right\},
\ee
where $F_{\mu\nu}=\pa_\mu A_\nu-\pa_\nu A_\mu$, $D_\mu\phi_a=\partial_\mu-\ii e_a A_\mu$ is the standard gauge covariant derivative. For later use, we assume general couplings, $(e_1,e_2)$, of $\Phi=(\phi_1,\phi_2)^T$ to the $U(1)$ gauge field, and $V$ is the most general $U(1)\times U(1)$ symmetric self--interaction potential
\be\label{1.2}
V=\frac{\beta_1}{2}(|\phi_1|^2-1)^2+\frac{\beta_2}{2}|\phi_2|^4+\beta'|\phi_1|^2|\phi_2|^2-\alpha|\phi_2|^2,
\ee
which is given by Witten \cite{Wi} and contains four real parameters, $\beta_1, \beta_2, \beta'$ and $\alpha$. One of the $U(1)$ gauge symmetry acts on the field as $\Phi\rightarrow\exp(\ii \chi)\Phi$, $A_\mu\rightarrow A_\mu+\pa_\mu \chi$, where $\chi=\chi(x)$ is the gauge function. The other $U(1)$ symmetry is global, and it acts on the fields as $\phi_1\rightarrow\exp(-\ii \alpha)\phi_1$, $\phi_2\rightarrow\exp(-\ii \alpha)\phi_2$, where $\alpha$ is a constant.

The field equations derived from the Lagrangian \eqref{1.1} are
\bea
\pa^\rho F_{\rho\mu}&=&\ii \sum_a e_a\{(D_\mu\phi_a)^\ast\phi_a-\phi_a^\ast D_\mu\phi_a\},\label{1.3}\\
D_\rho D^\rho \Phi&=&-\pa V(\Phi^\dag, \Phi)/\pa \Phi^\dag.\label{1.4}
\eea

By rotating the phases of $\phi_a$ separately, it can be shown that the conserved currents that correspond to the two $U(1)$ symmetries of the theory \eqref{1.1} have the following form
\be\label{1.5}
j_\mu^{(a)}=-\ii(\phi_a^\ast D_\mu\phi_a-\phi_a(D_\mu\phi_a)^\ast).
\ee
The electrical current in \eqref{1.3}--\eqref{1.4} is given by $j_\mu=\sum_a e_a j_\mu^{(a)}$. Furthermore, the additional symmetry can be obtained by changing the phase difference
\be\label{1.6}
j_\mu^3=j_\mu^{(1)}-j_\mu^{(2)}=-\ii(\phi_1^\ast D_\mu\phi_1-\phi_2^\ast D_\mu\phi_2-\phi_1(D_\mu\phi_1)^\ast+\phi_2(D_\mu\phi_2)^\ast),
\ee
which is consistent with the third isospin component of the global $SU(2)$ current of the semilocal theory \cite{F,Fo}.

A suitably reduced stationary, cylindrically symmetric ansatz in the radial gauge can be written as
\bea
\phi_1(r,\var,z)&=&f(r)\e^{\ii N\var},~~~~~~~A_\var(r,\var,z)=Na(r),\label{1.7}\\
\phi_2(r,\var,z)&=&g(r)\e^{\ii M\var}\e^{\ii \omega z},~~A_3(r,\var,z)=\omega b(r),\label{1.8}
\eea
with $A_0=A_r=0$ and $\omega$ is the real twist parameter. By applying the ansatz \eqref{1.7}--\eqref{1.8} to the field equations \eqref{1.3}--\eqref{1.4}, we obtain the reductions
\bea
&&r\left(\frac{a'}{r}\right)'=2f^2e_1(e_1a-1)+2g^2e_2(e_2a-M/N),\label{1.9}\\
&&\frac{1}{r}(r b')'=2b(e_1^2f^2+e_2^2g^2)-2e_2g^2,\label{1.10}\\
&&\frac{1}{r}(r f')'=f\left(\frac{(1-e_1a)^2N^2}{r^2}+e_1^2\omega^2b^2+\beta_1(f^2-1)+\beta'g^2\right)\label{1.11}\\
&&\frac{1}{r}(r g')'=g\left(\frac{(e_2Na-M)^2}{r^2}+\omega^2(1-e_2b)^2+\beta_2g^2-\alpha+\beta'f^2\right).\label{1.12}
\eea
The energy density for \eqref{1.9}--\eqref{1.12} is given by
\bea\label{1.13}
\mathcal{E}=&&\frac{1}{2}\left(\frac{N^2(a')^2}{r^2}+\omega^2(b')^2\right)+(f')^2+(g')^2+\frac{N^2(1-e_1a)^2}{r^2}f^2+\frac{(e_2Na-M)^2}{r^2}g^2\nn\\
&&+\omega^2\left(e_1^2b^2f^2+(1-e_2b)^2g^2\right)+V(f,g),
\eea
with $V(f,g)=\beta_1(f^2-1)^2/2+\beta_2g^4/2+\beta'f^2g^2-\alpha g^2$. Thus, for per unit length, the total energy associated with \eqref{1.9}--\eqref{1.12} is found to be as the integral over the plane of $\mathcal{E}$,
\be\label{1.14}
E=2\pi\int_0^\infty r\dd r \mathcal{E}.
\ee

We need to specify boundary conditions for equations \eqref{1.9}--\eqref{1.12}. First, we see from \eqref{1.7}--\eqref{1.8} and the regularity requirement that $f$ and $g$ must satisfy
\be\label{1.15}
\lim_{r\to 0}\left(f(r), g(r)\right)=(0, 0).
\ee
Secondly, finite energy condition
\be\label{1.16}
E<\infty
\ee
implies that $f(r)\to 1, g(r)\to 0, a(r)\to 1/e_1$ and $b(r)\to 0$ as $r\to\infty$. Besides, it can be seen from \eqref{1.16} that $a(r)\to$ some constant $C_0$ and $b(r)\to$ some constant $C_1$ as $r\to 0$. However, $C_0$ and $C_1$ cannot be determined complete. For convenient, we may assume $C_0=0$, then these boundary conditions can be recorded as follows
\bea
a(0)&=&0,~~~b(0)=C_1,~~~f(0)=0,~~~g(0)=0,\label{1.17}\\
a(\infty)&=&1/e_1,~~b(\infty)=0,~~f(\infty)=1,~~g(\infty)=0.\label{1.18}
\eea

Our main result on twisted vortices in the extended Abelian Higgs model, which is governed by the boundary value problem consisting of \eqref{1.9}--\eqref{1.12} and \eqref{1.17}--\eqref{1.18}, may be stated as follows.
\begin{theorem}\label{th2.1}
For any prescribed parameters $\beta_1, \beta_2, \beta', \alpha, M, N, e_1, e_2$ satisfying the condition
\be\label{1.19}
\beta_1,~\beta_2>0,~~\alpha\geq\beta'>0,~~\beta_1\beta_2>\beta'\alpha,~~M>N>0,
\ee
the twisted vortices equations \eqref{1.9}--\eqref{1.12} have a least energy solution $(a,b,f,g)$ which minimizes the energy \eqref{1.14} and enjoys the boundary conditions \eqref{1.17}--\eqref{1.18}. Moreover, there hold the properties $0<a(r)<1/e_1$, $b(r)$ is between $C_1$ and $0$, and $0<f(r)<1$ for all $r>0$, and obey the sharp decay estimates
\bea
a(r)&=&C_0+O(r^{2(1-\vep)}),~~b(r)=C_1+O(r^{2N(1-\vep)+2}),\nn\\
f(r)&=&O(r^{N(1-\vep)}),~~g(r)=O(r^{M(1-\vep)}),\nn
\eea
at origin, and
\bea
a(r)&=&1/e_1+O(\e^{-\sqrt{2}e_1(1-\vep)r}),~~b(r)=O(\e^{-\sqrt{2}e_1(1-\vep)r}),\nn\\
f(r)&=&O(\e^{-\sqrt{2}\beta_1(1-\vep)r}),~~g(r)=O(\e^{-\sqrt{\omega^2-\alpha+\beta'}(1-\vep)r}),\nn
\eea
at infinity, where the twist number $\omega$ satisfies $\omega^2>\alpha-\beta'$ and $\vep$ is an arbitrary number lying in the interval $(0,1)$.
\end{theorem}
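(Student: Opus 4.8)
The plan is to obtain $(a,b,f,g)$ as a minimizer of the energy \eqref{1.14} over an admissible class in which we impose only the six boundary conditions compatible with $E<\infty$, namely $f(0)=g(0)=0$ (forced by regularity of \eqref{1.7}--\eqref{1.8}, since $N,M>0$) together with $f(\infty)=1,\ g(\infty)=0,\ a(\infty)=1/e_1,\ b(\infty)=0$, while necessarily leaving the two gauge values $a(0),b(0)$ unconstrained. The preliminary task is to check that the minimization is well posed, i.e. that $E$ is bounded below with the vacuum sitting at $(f,g)=(1,0)$. Collecting the non-gradient terms of \eqref{1.13} that survive as $r\to\infty$ (where $b\to0$) produces the effective potential $\omega^2g^2+V(f,g)$; after the substitution $u=f^2,\,v=g^2\ge0$ this is a convex quadratic with Hessian $\bigl(\begin{smallmatrix}\beta_1 & \beta'\\ \beta' & \beta_2\end{smallmatrix}\bigr)$, which is positive definite because $\beta_1\beta_2>\beta'\alpha\ge\beta'^2$. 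Its constrained minimizer over $u,v\ge0$ is driven to the boundary $v=0$, with minimum value $0$ attained at $(u,v)=(1,0)$, precisely when the twist condition $\omega^2>\alpha-\beta'$ holds; this is exactly the role played by the hypotheses \eqref{1.19} and by $\omega$.

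With coercivity secured, existence follows from the direct method. I would take a minimizing sequence, use the energy bound to extract uniform $H^1$ estimates on compact subintervals of $(0,\infty)$ (the weighted terms $N^2(a')^2/r^2$ and $\omega^2(b')^2$ control $a,b$, while $(f')^2+(g')^2$ and the effective potential control $f,g$), pass to a weak limit by a diagonal argument, and use the weak lower semicontinuity of each convex term of \eqref{1.13}; elliptic regularity then makes the resulting weak solution of \eqref{1.9}--\eqref{1.12} classical. The bounds $0\le f\le1$ come from truncation: replacing $f$ by $\min(f,1)$ lowers simultaneously $(f')^2$, the centrifugal term, the twist term $\omega^2e_1^2b^2f^2$, and $V$, so minimality forbids $f>1$, and the same device gives $f\ge0$; the strict inequalities $0<f<1$ then follow from the strong maximum principle and Cauchy uniqueness for \eqref{1.11}. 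For $b$ I would apply the maximum principle to \eqref{1.10}, written as $b''+b'/r-2(e_1^2f^2+e_2^2g^2)b=-2e_2g^2$: the nonnegative zeroth-order coefficient together with the nonpositive right-hand side confine $b$ between its boundary values $0$ and $C_1$. The bound $0<a<1/e_1$ I would likewise extract from \eqref{1.9} by a maximum-principle/comparison argument exploiting the sign structure built into \eqref{1.19}.

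The genuine difficulty, and the reason two of the eight conditions escape the variational formulation, is that finite energy controls only $a'/r$ and $b'$, not the limits $a(0),b(0)$. Here I would use the sign structure of the right-hand sides of \eqref{1.9}--\eqref{1.10} to show that $a$ and $b$ are eventually monotone as $r\to0^+$; being bounded, they converge by the monotone convergence theorem to limits $C_0,C_1$, and the residual gauge freedom $\Phi\to e^{\mathrm{i}\chi}\Phi$ is spent to normalize $C_0=0$, thereby recovering \eqref{1.17}. The sharp decay estimates are then obtained by barrier (comparison-function) arguments. Near the origin the indicial structure of the linearized operators is decisive: $r(a'/r)'$ carries the homogeneous modes $\{1,r^2\}$, so the $r^2$ mode yields $a-C_0=O(r^{2(1-\vep)})$, whereas $\tfrac1r(rb')'$ admits only the constant as a regular homogeneous solution, so $b-C_1$ is governed by its source $f^2\sim r^{2N}$ and decays like $O(r^{2N(1-\vep)+2})$; the dominant centrifugal terms $N^2/r^2$ and $M^2/r^2$ give $f=O(r^{N(1-\vep)})$ and $g=O(r^{M(1-\vep)})$. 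At infinity, linearizing about the vacuum produces exponential barriers with rates equal to the effective masses read off from \eqref{1.9}--\eqref{1.12}, namely $\sqrt{2}\,e_1$ for $a$ and $b$, $\sqrt{2\beta_1}$ for $f$, and $\sqrt{\omega^2-\alpha+\beta'}$ for $g$ (finite precisely because $\omega^2>\alpha-\beta'$), with $\vep\in(0,1)$ absorbing the subleading corrections to the variable coefficients. I expect the monotonicity-and-convergence step that pins down $C_0$ and $C_1$ to be the main obstacle, since it is exactly where minimization alone does not suffice.
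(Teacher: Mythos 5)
Your overall strategy---direct minimization with only part of the eight boundary conditions built into the admissible class, recovery of the remaining ones from the Euler--Lagrange equations by sign/monotonicity arguments, and comparison-function barriers with the maximum principle for the sharp rates---is the same as the paper's. Your reading of the indicial modes at the origin ($\{1,r^2\}$ for $a$, only the constant for $b$, the centrifugal exponents $N$ and $M$ for $f$ and $g$) and of the masses $\sqrt{2}e_1$, $\sqrt{2\beta_1}$, $\sqrt{\omega^2-\alpha+\beta'}$ at infinity reproduces Lemmas \ref{le.3} and \ref{le.4}, and your treatment of $b(0)=C_1$ by integrating \eqref{1.10} and reading off the sign of $rb'$ is exactly the second half of Lemma \ref{le.2}. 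One point where you genuinely diverge, to your advantage: the paper's lower bound rests on Lemma \ref{le.1}, which classifies the critical point of $V$ as a maximum even though the computed Hessian is positive definite, whereas you absorb $-\alpha g^2$ into $\omega^2 g^2+V$ and use convexity plus the first-order condition at $(1,0)$; this is where $\omega^2>\alpha-\beta'$ actually earns its keep (the paper uses it only for the decay rate of $g$). Note, however, that the coefficient in \eqref{1.13} is $\omega^2(1-e_2b)^2$, not $\omega^2$, so you still need a priori control of $b$ before this grouping yields coercivity.

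The genuine gap is in your choice of which six conditions to impose. You put $a(\infty)=1/e_1$ into the admissible class and leave $a(0)$ free; the paper does the opposite ($a(0)=0$ imposed in $\mathcal{X}$, $a(\infty)=1/e_1$ recovered in Lemma \ref{le.2}), and the asymmetry is not cosmetic. The constraint $a(0)=0$ survives the weak limit because Cauchy--Schwarz against $\int (a')^2 r^{-1}\,\dd r$ gives the uniform modulus $|a_n(r)|\leq Cr$ near the origin (inequality \eqref{2.6}); there is no analogous estimate at infinity---the energy controls only $\int (a')^2 r^{-1}\,\dd r$ and $\int (1-e_1a)^2 f^2 r^{-1}\,\dd r$, which cannot be paired to give uniform convergence of $a_n$ on $[R,\infty)$---so nothing in your argument forces the weak limit to satisfy $a(\infty)=1/e_1$, and the infimum over your class need not be attained in it. The fix is exactly the paper's: drop that constraint and prove $a(\infty)=1/e_1$ a posteriori from \eqref{1.9}, using $\liminf_{r\to\infty}a'(r)/r=0$ and the sign of $(a'/r)'$ for large $r$. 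Relatedly, your plan to spend residual gauge freedom to normalize $C_0=0$ does not work as stated: a constant shift of $A_\vartheta=Na$ corresponds to $\chi=c\vartheta$, which is single-valued only for integer $c$ and then changes the winding number $N$; the paper instead simply builds $a(0)=0$ into the definition of $\mathcal{X}$.
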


\section{Solutions to equations of motion}\label{s2}
\setcounter{equation}{0}

In this section we present and comment on the mathematical details of the result given in Theorem \ref{th2.1}. Methods of energy minimization and asymptotic realization will be used.

First, we concentrate on the potential function $V(f,g)$ in \eqref{1.13}.

\begin{lemma}\label{le.1}
Suppose the parameters $\beta_1,~\beta_2,~\beta'$ and $\alpha$ satisfy
\be\label{1.19b}
\beta_1,~\beta_2>0,~\alpha\geq\beta'>0,~\beta_1\beta_2>\beta'\alpha,
\ee
 then the potential $V(f, g)$ in \eqref{1.13} has a zero bound from below.
\end{lemma}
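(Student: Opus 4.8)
The plan is to establish the pointwise inequality $V(f,g)\ge0$ for all real $f,g$, so that zero is genuinely a lower bound. Since $V$ depends on its arguments only through $f^2$ and $g^2$, I would first substitute $s=f^2\ge0$, $t=g^2\ge0$ and regard
\be\label{psi}
V=\frac{\beta_1}{2}(s-1)^2+\frac{\beta_2}{2}t^2+\beta'st-\alpha t
\ee
as a function on the closed first quadrant $\{s\ge0,\ t\ge0\}$; proving $V\ge0$ there is equivalent to the lemma, with the zero level attained at the vacuum $s=1,\ t=0$, i.e.\ $(f,g)=(\pm1,0)$, which is consistent with the claimed bound.

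The first step is a coercivity and positivity observation. The purely quadratic part of \eqref{psi} in $(s,t)$ has symmetric matrix $\left(\begin{smallmatrix}\beta_1&\beta'\\ \beta'&\beta_2\end{smallmatrix}\right)$, whose leading principal minors are $\beta_1>0$ and $\beta_1\beta_2-(\beta')^2$. From the hypotheses \eqref{1.19b} one gets $\beta_1\beta_2>\beta'\alpha\ge(\beta')^2$, using $\alpha\ge\beta'>0$, so this matrix is positive definite; hence $V$ is coercive on the quadrant and its infimum is attained. I would then locate the minimizer from the stationarity equations $\beta_1(s-1)+\beta't=0$ and $\beta_2t+\beta's-\alpha=0$, check that the solution lies in the first quadrant so that the unconstrained and constrained minima coincide, and treat the edges $t=0$ (where $V=\tfrac{\beta_1}{2}(s-1)^2\ge0$ outright) and $s=0$ separately as safeguards.

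The decisive step is to show the minimum value is nonnegative. Eliminating $t$ through the stationary relation $t=(\alpha-\beta's)/\beta_2$ reduces the claim to the one–variable estimate $\beta_1\beta_2(s-1)^2\ge(\alpha-\beta's)^2$ on the range $0\le s<\alpha/\beta'$, which I would seek to repackage as a sum–of–squares certificate for $V$ with nonnegative coefficients, governed precisely by the determinant condition $\beta_1\beta_2>\beta'\alpha$ together with the sign condition $\alpha\ge\beta'$. I expect this certificate to be the main obstacle: the indefinite linear term $-\alpha t$ prevents $V$ from being a manifest sum of squares, so the nonnegativity is not visible termwise and must be extracted from the joint interplay of all four parameters through $\beta_1\beta_2>\beta'\alpha$. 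Once it is secured, $V\ge0$ follows on the whole quadrant, and unwinding the substitution yields $V(f,g)\ge0$ for all $f,g$, with equality exactly at the vacuum $(f,g)=(\pm1,0)$.
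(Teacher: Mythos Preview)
Your approach is essentially the paper's: substitute $s=f^2$, $t=g^2$, observe that the Hessian $\left(\begin{smallmatrix}\beta_1&\beta'\\ \beta'&\beta_2\end{smallmatrix}\right)$ is positive definite (so the unique critical point is the global minimum on $\mathbb{R}^2$), locate that point, and check its value is nonnegative. Your handling of coercivity and of the boundary edges $s=0$, $t=0$ is in fact more careful than the paper's.

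However, the ``decisive step'' you correctly flag as the main obstacle does not go through: the pointwise bound $V\ge0$ is false under the stated hypotheses whenever $\alpha>\beta'$. From the stationarity relations $s_0-1=-\tfrac{\beta'}{\beta_1}t_0$ and $\beta_2 t_0=\alpha-\beta' s_0$ one computes directly
\[
V(s_0,t_0)=\frac{(\beta')^2-\beta_1\beta_2}{2\beta_1}\,t_0^{2},
\qquad t_0=\frac{\beta_1(\alpha-\beta')}{\beta_1\beta_2-(\beta')^2}\ge0,
\]
which is $\le0$ and strictly negative for $\alpha>\beta'$. Equivalently, your one--variable inequality $\beta_1\beta_2(s-1)^2\ge(\alpha-\beta' s)^2$ fails at $s=s_0$: the left side equals $\tfrac{\beta_2(\beta')^2}{\beta_1}t_0^2$ while the right side equals $\beta_2^{2}t_0^2$. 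A concrete instance is $\beta_1=\beta_2=2$, $\beta'=1$, $\alpha=3/2$, giving $(s_0,t_0)=(5/6,1/3)$ and $V(s_0,t_0)=-1/12$. So the sum--of--squares certificate you are hoping for cannot exist.

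The paper follows the same outline and shares the same gap, compounded by two slips: the positive definite Hessian is labelled a ``local maximum'', and the displayed value \eqref{2.4} for $V(x_0,y_0)$ is not the correct critical value. Thus your methodology matches the paper's, but the step you identified as the crux genuinely fails; the lemma as stated only holds in the borderline case $\alpha=\beta'$.
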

\begin{proof}
It will be convenient to use the substitution
\be
x=f^2,~y=g^2,\nn
\ee
then the potential function can be written as
\be\label{2.2}
V(x, y)=\frac{\beta_1}{2}(x-1)^2+\frac{\beta_2}{2}y^2+\beta'xy-\alpha y.
\ee
Setting
\bea
V_x&=&\beta_1(x-1)+\beta'y=0,\nn\\
V_y&=&\beta_2y+\beta'x-\alpha=0,\nn
\eea
thus we yield the single critical point $(x_0, y_0)$ and
\be\label{2.3}
x_0=\frac{\beta_1\beta_2-\beta'\alpha}{\beta_1\beta_2-(\beta')^2},~~y_0=\frac{\beta_1(\alpha-\beta')}{\beta_1\beta_2-(\beta')^2},
\ee
which are nonnegative by virtue of \eqref{1.19b}. Besides, since
\be
V_{xx}=\beta_1>0,~~V_{xy}=\beta',~~V_{yy}=\beta_2>0,\nn
\ee
we conclude that the critical point $(x_0, y_0)$ corresponds to a local maximum and the value of $V$ at this point is
\be\label{2.4}
V(x_0, y_0)=\frac{\beta_1}{2}(\alpha-\beta')\left(\alpha(\beta')^2+(\beta_1\beta_2-(\beta')^2)\beta'\right)\geq 0.
\ee
Furthermore, $(f, g)\to (1, 0)$ as $r\to\infty$, and $V(1, 0)\equiv 0$; $(f, g)\to (0, 0)$ as $r\to 0$, and $V(0, 0)=\beta_1/2>0$.

In conclusion, we always have the potential $V\geq 0$. The proof is complete.
\end{proof}

Note that all terms in the energy functional \eqref{1.14} are nonnegative except the term $-\alpha g^2(r)$ in the potential $V$. This means that the energy functional \eqref{1.14} is bounded from below.

With preparation given above, we now are ready to solve the two--point boundary value problem \eqref{1.9}--\eqref{1.12} with \eqref{1.17}--\eqref{1.18}.

Set
\be\label{2.1}
\eta_0=\inf\{E(a, b, f, g)~|~(a, b, f, g)\in \mathcal{X}\},
\ee
where the admissible space is defined by
\bea\label{2.1b}
\mathcal{X}=&&\{(a, b, f, g)~|~E(a, b, f, g)<\infty, a, b, f, g~\text{are continuous on}~(0,\infty)~\text{and}\nn\\
&&\text{absolutely continuous in any compact subinterval of}~(0, \infty)~\text{and satisfy}\nn\\
&&a(0)=0, f(0)=0, g(0)=0, b(\infty)=0, f(\infty)=1, g(\infty)=0\}.
\eea

Note that the structure of the functional \eqref{1.14} indicates that we can always modify $(a, b, f, g)$ in $\mathcal{X}$ if necessary to lower the energy to obtain the property
\be\label{2.5}
0\leq a\leq 1/e_1,~~b~\text{is between}~C_1~\text{and}~0,~~0\leq f(r)\leq 1.
\ee
Thus, from now on, we always observe this assumption.

Let $\{(a_n, b_n, f_n, g_n)\}$ be a minimizing sequence of \eqref{2.1}. It is easy to see that for any pair of numbers $0<p<q<\infty$, $\{(a_n, b_n, f_n, g_n)\}$ is a bounded sequence in $W^{1,2}(p,q)$. By a diagonal subsequence argument, we can obtain the existence of a quartet $a, b, f, g\in W_{\text{loc}}^{1,2}(0,\infty)$, so that $E(a, b, f, g)<\infty$ and $a_n\to a, b_n\to b, f_n\to f, g_n\to g~(n\to\infty)$ weakly in $W^{1,2}(p,q)$ and strongly in $C[p, q]$ for any $0<p<q<\infty$. Once the boundary conditions are verified, all that remains is to show the weak semicontinuity of $E$ over $\mathcal{X}$.

Let
\bea\label{2.12b}
&&E(a,b,f,g;p,q)\nn\\
=&&2\pi\int_p^q\Big\{\frac{1}{2}\left(\frac{N^2(a')^2}{r^2}+\omega^2(b')^2\right)+(f')^2+(g')^2+\frac{N^2(1-e_1a)^2}{r^2}f^2+\frac{(e_2Na-M)^2}{r^2}g^2\nn\\
&&+\omega^2\left(e_1^2b^2f^2+(1-e_2b)^2g^2\right)+\beta_1(f^2-1)^2/2+\beta_2g^4/2+\beta'f^2g^2-\alpha g^2\Big\}r \dd r,
\eea
where the pair numbers $0<p<q<\infty$ and $a,b,f,g$ are absolutely continuous over $(0,\infty)$. Then, it is clear that
\be\label{2.13c}
\lim_{n\to\infty}(E(a_n,b_n,f_n,g_n;p,q)-E(a,b,f,g;p,q))=0.
\ee
Hence, we have
\be\label{2.13d}
E(a,b,f,g;p,q)\leq \liminf_{n\to\infty}E(a_n,b_n,f_n,g_n;p,q).
\ee

In view of \eqref{2.1b}, there holds
\be\label{2.13e}
\lim_{r\to 0} r(g_n^2-g^2)=0.
\ee
Besides, the boundary condition and asymptotic behavior of $g(r)$ at infinity (we will show these results in the later) tells us
\be\label{2.13f}
\lim_{r\to \infty} r(g_n^2-g^2)=0.
\ee
Combining \eqref{2.13c}, \eqref{2.13e} and \eqref{2.13f}, we deduce that
\be\label{2.13g}
\lim_{n\to\infty}\int_0^\infty r g_n^2\dd r=\int_0^\infty r g^2\dd r.
\ee
Therefore, let $p\to 0$ and $q\to\infty$ in \eqref{2.13d}, we can conclude that
\be\label{2.13h}
E(a,b,f,g)=E(a,b,f,g;0,\infty)\leq \eta_0.
\ee

Furthermore, in order to prove that $(a,b,f,g)\in \mathcal{X}$ is the solution of the minimization problem \eqref{2.1}, it is necessary to show that $(a,b,f,g)$ lies in $\mathcal{X}$. In other words, we need to verify the boundary conditions stated in the admissible space $\mathcal{X}$. We proceed as follows.

Assuming
\be\label{2.5b}
E(a_n, b_n, f_n, g_n)\leq\eta_0+1~~\text{for all}~n.
\ee
Then
\be\label{2.6}
|a_n(r)-0|\leq\int_0^r|a_n'(s)|\dd s\leq\left(\int_0^r s\dd s\right)^{\frac 12}\left(\int_0^r\frac{(a_n'(s))^2}{s}\dd s\right)^{\frac 12}\leq\frac{\eta_0+1}{\sqrt{2}\pi N^2}r
\ee
showing that $a_n(r)\to 0$ as $r\to 0$ uniformly. Besides, we observe that
\be\label{2.7}
|f_n^2(r_1)-f_n^2(r_2)|\leq 2\int_{r_1}^{r_2}|f_n'(r)f_n(r)|\dd r\leq 2\left(\int_{r_1}^{r_2}r (f_n'(r))^2 \dd r\right)^{\frac 12}\left(\int_{r_1}^{r_2}\frac{f_n^2(r)}{r} \dd r\right)^{\frac 12}\to 0
\ee
as $r_1, r_2\to 0$ due to \eqref{2.5b} and the property that $a(r)\to 0$ as $r\to 0$. So $\lim_{r\to 0}f_n(r)$ exists, says $f^*$. In fact, $f^*=0$. Otherwise, the integral
\be
\int_0^\infty\frac{N^2(1-e_1a(r))^2}{r}f^*\dd r\nn
\ee
diverges at $r=0$ which is a contradiction to the finite energy. Similarly, using $a(r)\to 0$ as $r\to 0$ and the assumption
\be\label{2.7b}
\int_0^\infty (rg'^2+\frac{(e_2Na-M)^2}{r}g^2)\dd r \leq\eta_0+1,
\ee
we see
\be\label{2.8}
|g_n^2(r_1)-g_n^2(r_2)|\leq 2\int_{r_1}^{r_2}|g_n'(r)g_n(r)|\dd r\leq 2\left(\int_{r_1}^{r_2}r (g_n'(r))^2 \dd r\right)^{\frac 12}\left(\int_{r_1}^{r_2}\frac{g_n^2(r)}{r} \dd r\right)^{\frac 12}\to 0
\ee
as $r_1, r_2\to 0$. Then \eqref{2.7b} and \eqref{2.8} lead to $\lim_{r\to 0}g_n(r)=0$. Therefore, $f_n(r)\to 0$ and $g_n(r)\to 0$ uniformly as $r\to 0$.

In order to show that $\lim_{r\to\infty}f_n(r)=1$, we set
\be\label{2.9}
F_n(r)=f_n(r)-1,
\ee
then $-1\leq F_n(r)\leq0$. In terms of \eqref{2.5b}, there exists some large $R_0>0$, such that
\bea\label{2.10}
\eta_0+1&\geq&\int_{R_0}^\infty \left(r(f_n'(r))^2+\frac{\beta_1 r}{2}(f_n^2(r)-1)^2\right)\dd r\nn\\
&\geq&\int_{R_0}^\infty \left(r(F_n'(r))^2+\frac{\beta_1 r}{2}F_n^2(r)(F_n(r)+2)^2\right)\dd r\nn\\
&\geq&\int_{R_0}^\infty \left(r(F_n'(r))^2+\frac{\beta_1 r}{2}F_n^2(r)\right)\dd r\nn\\
&\geq& R_0\int_{R_0}^\infty \left((F_n'(r))^2+\frac{\beta_1 }{2}F_n^2(r)\right)\dd r,
\eea
which means $F_n(r)\in W^{1,2}(R_0,\infty)$. Thus, we get $F_n(\infty)=0$. Consequently, $f_n(r)\to 1$ as $r\to\infty$ uniformly. This fact shows that there exists a sufficiently large constant $R_1>0$ such that $f_n(r)>\frac 12$ for $r\geq R_1$ so that
\be\label{2.11}
\eta_0+1\geq2\pi \omega^2\int_0^\infty r\left(\frac{(b_n'(r))^2}{2}+e_1^2b_n^2(r)f_n^2(r)\right)\dd r\geq2\pi \omega^2 R_1\int_{R_1}^\infty\left(\frac{(b_n'(r))^2}{2}+\frac{e_1^2b_n^2(r)}{4}\right)\dd r
\ee
when $r$ is sufficiently large in view of the finite energy \eqref{2.5b}. Therefore, $b_n(r)\in W^{1,2}(R_1,\infty)$ which implies $b_n(r)\to 0$ as $r\to\infty$ uniformly.

Furthermore, according to $b_n(\infty)=0$, we see that there exists a sufficiently large constant $R_2>0$ so that $(1-e_2 b_n(r))^2>\frac 14$ for $r>R_2$. Using the same method as showing that $b_n(\infty)=0$, we can deduce that $g_n(r)\in W^{1,2}(R_2, \infty)$. Of course, $g_n(r)\to 0$ as $r\to\infty$ uniformly.

So far we have proved that the boundary conditions given in $\mathcal{X}$ are satisfied.

Hence, $(a,b,f,g)\in \mathcal{X}$ solves the minimization problem \eqref{2.1}. While, in order to show that the minimizer obtained in \eqref{2.1} is a solution of the equations \eqref{1.9}--\eqref{1.12} under the constraints \eqref{1.17}--\eqref{1.18}, it remains to establish that $a(\infty)=1/e_1$ and $b(0)=C_1$. To this end, we will use the equations \eqref{1.9} and \eqref{1.10}.

\begin{lemma}\label{le.2}
Suppose $(a,b,f,g)$ solves the minimizer problem \eqref{2.1}. Then the functions $a$ and $b$ fulfill the desired boundary conditions
\be\label{2.16}
\lim_{r\to\infty}a(r)=\frac{1}{e_1},~~~\lim_{r\to 0}b(r)=C_1.
\ee
\end{lemma}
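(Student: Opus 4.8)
The plan is to verify the two outstanding boundary conditions separately, using the field equations \eqref{1.9} and \eqref{1.10} to upgrade the integral (energy) control already in hand to genuine pointwise limits. Throughout I use that the minimizer is a classical solution on $(0,\infty)$ by elliptic regularity and that it obeys the pointwise bounds \eqref{2.5}; in particular $a$ and $b$ are uniformly bounded, while $f\to0,\,g\to0$ as $r\to0$ and $f\to1,\,g\to0$ as $r\to\infty$ by the boundary conditions already established in $\mathcal{X}$.

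For the condition $\lim_{r\to0}b(r)=C_1$ I would argue directly from \eqref{1.10}, which I write as $(rb')'=r\big[2b(e_1^2f^2+e_2^2g^2)-2e_2g^2\big]$. The bracket is bounded near the origin (since $b$ is bounded and $f,g\to0$), so the right-hand side is $O(r)$, hence integrable on $(0,r_0)$, and therefore $rb'(r)$ has a finite limit $L$ as $r\to0$. This limit must vanish: if $L\neq0$ then $b'(r)\sim L/r$, forcing $\int_0 r(b')^2\,\dd r=\infty$ and contradicting the finite-energy bound $\int_0 r(b')^2\,\dd r<\infty$. With $L=0$, integrating once more gives $rb'(r)=\int_0^r s\big[\cdots\big]\,\dd s=O(r^2)$, so $b'(r)=O(r)$ is integrable at the origin and $b(r)$ converges to a finite constant, which we name $C_1$. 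This is the uniform-estimate part and is comparatively soft.

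The condition $\lim_{r\to\infty}a(r)=1/e_1$ is the harder half, and I expect it to be the main obstacle. The energy bound only yields $\int^\infty\frac{(a')^2}{r}\,\dd r<\infty$ and $\int^\infty\frac{(1-e_1a)^2}{r}f^2\,\dd r<\infty$; neither gives convergence of $a$ directly, because the Cauchy--Schwarz estimate of $|a(r_2)-a(r_1)|$ picks up the divergent factor $\int r\,\dd r$, and the forcing term in \eqref{1.9} is not obviously integrable at infinity. The remedy is to extract monotonicity from the equation (the bounded-monotonic-theorem step). Rewriting \eqref{1.9} as $(a'/r)'=Q(r)/r$ with $Q=2f^2e_1(e_1a-1)+2g^2e_2(e_2a-M/N)$, the leading term $2f^2e_1(e_1a-1)$ is nonpositive since $0\le a\le1/e_1$, while the $g^2$-term is negligible as $r\to\infty$ because $g\to0$. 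The delicate point is precisely that the leading term degenerates as $a\to1/e_1$, so $Q$ need not be sign-definite; I would control this by a maximum-principle observation at interior local minima of $a$: at any large local minimum $r^\ast$ one has $a'(r^\ast)=0$ and $a''(r^\ast)\ge0$, whence $0\le Q(r^\ast)=2f^2e_1(e_1a(r^\ast)-1)+o(1)$, which together with $f\ge\tfrac12$ for large $r$ forces $a(r^\ast)\ge1/e_1-o(1)$. Since $a\le1/e_1$ and every far-out local minimum is thereby pushed up toward $1/e_1$, the function $a$ is eventually monotone and bounded, so $a(\infty)=A$ exists with $A\le1/e_1$.

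Finally I would identify the limit. If $A<1/e_1$, then $1-e_1a(r)\to1-e_1A>0$ and $f\to1$, so the integrand of $\int^\infty\frac{(1-e_1a)^2}{r}f^2\,\dd r$ is bounded below by $c/r$ for large $r$ and the integral diverges, contradicting finite energy. Hence $A=1/e_1$, completing \eqref{2.16}. The crux of the whole argument is the convergence step for $a$: converting averaged $L^2$-type energy control into a pointwise limit, where the loss of sign-definiteness of $Q$ near the target value $1/e_1$ is the feature that must be tamed, and it is exactly there that the structure $0\le a\le1/e_1$ together with $g\to0$ is used.
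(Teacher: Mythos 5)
Your proof is correct in substance and, for the harder half $a(\infty)=1/e_1$, takes a genuinely different route from the paper. The paper first shows $\liminf_{r\to\infty}a'(r)/r=0$ (else $\int (a')^2 r^{-1}\dd r$ diverges), then derives from \eqref{1.9} a differential inequality $(a'/r)'<\frac{e_1^2}{r}\left(a-\frac{1}{e_1}\right)\le 0$ for large $r$ and integrates it to infinity along the liminf subsequence to conclude $a'(r)>0$ for large $r$; monotone boundedness then gives the limit, which is identified exactly as you do, via finiteness of $\int (1-e_1a)^2f^2 r^{-1}\dd r$. Your replacement --- evaluating $Q=r(a'/r)'$ at far-out interior local minima, where $a'=0$ and $a''\ge 0$ force $Q\ge 0$ and hence $a\ge 1/e_1-o(1)$ --- is a clean maximum-principle variant that avoids the liminf extraction. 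One step is stated too strongly: local minima being pushed up toward $1/e_1$ does not make $a$ ``eventually monotone.'' What it actually gives is a dichotomy: either $a$ has no local minima beyond some $R$, in which case it is eventually monotone and bounded, so the limit exists and the energy integral identifies it; or it has local minima at arbitrarily large $r$, in which case the minimum of $a$ over any interval between consecutive far-out local minima is $\ge 1/e_1-\vep$, whence $\liminf a\ge 1/e_1\ge\limsup a$ directly. Either branch closes the argument, so this is a one-line repair rather than a gap. For $b(0)=C_1$ your argument parallels the paper's use of \eqref{1.10} integrated from the origin, but is tighter: you obtain the full limit $rb'(r)\to 0$ from integrability of $(rb')'=O(r)$ together with the energy bound $\int_0 r(b')^2\dd r<\infty$, whereas the paper only extracts $\liminf_{r\to 0}r|b'(r)|=0$ and, to control the sign of $b'$, invokes the decay rates $f=O(r^{N(1-\vep)})$ and $g=O(r^{M(1-\vep)})$ that are proved only in the following section; your version avoids that forward reference entirely.
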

\begin{proof}
Recall that $(a,b,f,g)$ solves the minimizer problem \eqref{2.1}, we see that $a$ and $b$ satisfy the equations \eqref{1.9} and \eqref{1.10}.

First, we show that $\lim_{r\to\infty}a(r)=\frac{1}{e_1}$. We claim
\be\label{2.18}
\liminf_{r\to\infty}\frac{a'(r)}{r}=0.
\ee
Otherwise, there are constant $\vep_0>0$ and $R_3>0$ such that
\be\label{2.19}
\left|\frac{a'(r)}{r}\right|>\vep_0~~\text{for}~~r>R_3.
\ee
Then, using \eqref{2.19}, we arrive at
\be\label{2.20}
\int_r^{\infty}\frac{(a'(s))^2}{s}\dd s>\int_r^{\infty}s\vep_0^2 \dd s=\infty,
\ee
which contradicts the convergence of the integral $\int_0^\infty\frac{(a'(r))^2}{r}\dd r$.

Since $f(\infty)=1$ and $g(\infty)=0$, we get from \eqref{1.9} that the inequality
\be\label{2.17}
\left(\frac{a'}{r}\right)'<\frac{e_1^2}{r}\left(a-\frac{1}{e_1}\right)
\ee
holds when $r$ is sufficiently large.

Hence, for $r>0$ large enough, we can apply \eqref{2.17} and \eqref{2.18} to deduce that
\be\label{2.21}
\frac{a'(r)}{r}>\int_r^\infty\frac{e_1^2}{s}\left(\frac{1}{e_1}-a(s)\right)\dd s.
\ee
Therefore, $a'(r)>0$ for $r>0$ sufficiently large. Using this and \eqref{2.5}, we see that $a(r)$ approaches its limiting value $a_\infty$ as $r\to\infty$. Moreover, $f(\infty)=1$ and the convergence of the integral $\int_0^\infty\frac{(1-e_1 a(r))^2}{r}f(r)\dd r$ imply $a_\infty=1/e_1$.

In the following, we will show that $\lim_{r\to 0}b(r)=C_1$. We claim
\be\label{2.22}
\liminf_{r\to 0}r|b'(r)|=0.
\ee
In fact, if \eqref{2.22} fails to exist, there are constants $\vep_1>0$ and $\delta>0$ so that
\be\label{2.23}
r|b'(r)|\geq\vep_1,~~0<r<\delta.
\ee
Thus, we have from \eqref{2.23} that
\be\label{2.24}
\int_{0}^{\delta}r(b'(r))^2\dd r\geq\int_{0}^{\delta}\frac{\vep_1^2}{r}\dd r=\infty.
\ee
This result is in contradiction to the convergence of the integral $\int_0^\infty\frac{\omega^2 r(b'(r))^2}{2}\dd r$.

Integrating \eqref{1.10} and using \eqref{2.22}, we obtain
\be\label{2.25}
r b'(r)=\int_0^r s\left(2b(s)e_1^2f^2(s)+2e_2^2\left(b(s)-\frac{1}{e_2}\right)g^2(s)\right)\dd s.
\ee
Note that $f(r)=O(r^{N(1-\vep)}), g(r)=O(r^{M(1-\vep)})$ as $r\to 0$ (these two estimates will be show in the next section), we see that $b'(r)\geq0$ when $r>0$ small. Combine this fact with $b(r)$ is bounded for all $r\geq 0$, we see that there exists a number $C_1$ so that $\lim_{r\to 0}b(r)=C_1$.
\end{proof}

In conclusion, we have shown that $(a,b,f,g)$ is a least energy solution of \eqref{1.9}--\eqref{1.12} subject to the boundary conditions \eqref{1.17}--\eqref{1.18}.

\section{Asymptotic estimates}\label{s3}
\setcounter{equation}{0}

In this section we will present some properties of the solution of \eqref{1.9}--\eqref{1.12} obtained in the last section. First, we have

\begin{lemma}\label{le.3}
For the twisted vortices solution $(a,b,f,g)$ of \eqref{1.9}--\eqref{1.12} obtained in the last section, there hold the decay estimates
\bea
a(r)&=&O(r^{2(1-\vep)}),~~b(r)=C_1+O(r^{2N(1-\vep)+2}),\nn\\
f(r)&=&O(r^{N(1-\vep)}),~~g(r)=O(r^{M(1-\vep)}),\label{3.1}
\eea
for $r\to 0$, where $0<\vep<1$ is an arbitrary number.
\end{lemma}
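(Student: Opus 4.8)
\emph{Proof proposal.} The plan is to treat $r=0$ as a regular singular point, read off the leading behaviour of $f$ and $g$ from the equidimensional (Euler) part of \eqref{1.11}--\eqref{1.12}, and then feed those rates into \eqref{1.9}--\eqref{1.10} to obtain the estimates for $a$ and $b$ by integration. I would establish the $f$- and $g$-rates first, since everything else depends on them, and the key point is that the only genuinely singular contribution near the origin is the centrifugal term, whose coefficient is controlled by $a(0)=0$.

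For $f$ I rewrite \eqref{1.11} as $f''+\frac1r f'=Q(r)f$ with $Q(r)=\frac{(1-e_1a)^2N^2}{r^2}+e_1^2\omega^2b^2+\beta_1(f^2-1)+\beta'g^2$. Since $a(0)=0$ by \eqref{1.17} and $b,f,g$ are bounded near $0$, the terms other than the centrifugal one are bounded, while $(1-e_1a)^2N^2\to N^2$. Hence for any fixed $\vep\in(0,1)$ there is $r_0>0$ with $Q(r)\ge (1-\vep)^2N^2/r^2$ on $(0,r_0)$: once $(1-e_1a)^2$ exceeds, say, $(1-\vep/2)^2$, the surplus $r^{-2}$ term dominates the bounded perturbations. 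Comparing $f$, which by \eqref{2.5} is a nonnegative subsolution of $Lv:=v''+\frac1r v'-\frac{(1-\vep)^2N^2}{r^2}v$, with the exact solution $U(r)=Cr^{N(1-\vep)}$ of $LU=0$, and using the maximum principle for $L$ (whose zeroth-order coefficient is negative), I choose $C$ so that $f(r_0)\le U(r_0)$ and let the inner endpoint tend to $0$, where both $f$ and $U$ vanish, to conclude $f\le U$ on $(0,r_0]$, i.e.\ $f(r)=O(r^{N(1-\vep)})$. The identical argument on \eqref{1.12}, where $(e_2Na-M)^2\to M^2$ supplies the centrifugal coefficient, gives $g(r)=O(r^{M(1-\vep)})$.

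With these rates the estimates for $a$ and $b$ follow by integration. Writing \eqref{1.9} as $(a'/r)'=h(r)/r$ with $h=2e_1f^2(e_1a-1)+2e_2g^2(e_2a-M/N)$, the $f^2=O(r^{2N(1-\vep)})$ term dominates the $g^2=O(r^{2M(1-\vep)})$ term because $M>N$, so $h(r)/r=O(r^{2N(1-\vep)-1})$ is integrable at $0$; one integration shows $a'/r$ has a finite limit and $a'(r)=O(r)+O(r^{2N(1-\vep)+1})$, and a second integration with $a(0)=0$ gives $a(r)=O(r^2)$, which in particular yields the stated $O(r^{2(1-\vep)})$. For $b$ I use the first integral \eqref{2.25}, namely $rb'(r)=\int_0^r s\bigl(2be_1^2f^2+2e_2^2(b-1/e_2)g^2\bigr)\,\dd s$, valid because $\lim_{r\to0}rb'(r)=0$ by \eqref{2.22}. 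The integrand is $O(s^{2N(1-\vep)+1})$, so $rb'(r)=O(r^{2N(1-\vep)+2})$, hence $b'(r)=O(r^{2N(1-\vep)+1})$, and integrating from $0$ with $b(0)=C_1$ gives $b(r)=C_1+O(r^{2N(1-\vep)+2})$.

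The main obstacle is the comparison step for $f$ and $g$: one must verify carefully that the bounded part of $Q(r)$ cannot spoil the inequality $Q(r)\ge(1-\vep)^2N^2/r^2$ near the origin (a short $\vep$-bookkeeping argument fixing $r_0$ so that $(1-e_1a)^2$ is sufficiently close to $1$), and then apply the maximum principle on the singular interval $(0,r_0]$ with the degenerate inner endpoint handled by a limiting argument. Once this subsolution comparison is in place the remaining $a$- and $b$-estimates are routine double integrations, and the scheme is free of circularity: only $a(0)=0$, not the refined rate for $a$, is needed to launch the arguments for $f$ and $g$.
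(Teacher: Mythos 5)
Your argument is correct, and for $f$, $g$ and $b$ it is essentially the paper's: the same comparison functions $Cr^{N(1-\vep)}$, $Cr^{M(1-\vep)}$ with the maximum principle for the singular operator near $r=0$ (the paper's \eqref{3.5}--\eqref{3.8}), and the same double integration of the first integral for $b$ (the paper's \eqref{3.9}--\eqref{3.13}). Two points of genuine difference are worth recording. First, your treatment of $a$ diverges from the paper's: the paper also handles $a$ by a comparison function $A_\vep(r)=Cr^{2(1-\vep)}$ and the maximum principle applied to \eqref{1.9} (using only $f(0)=g(0)=0$, not the rates of $f,g$), whereas you integrate $(a'/r)'=h(r)/r$ twice after the $f$- and $g$-rates are in hand; your route costs you the need to establish $f,g$ first, but buys the sharper conclusion $a(r)=O(r^2)$ (and in fact $a'(r)/r\to$ const), which implies the stated $O(r^{2(1-\vep)})$. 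Second, your $\vep$-bookkeeping for the comparison step is slightly more careful than the paper's: in \eqref{3.5} the perturbation $f\bigl(e_1^2\omega^2b^2+\beta_1(f^2-1)+\beta'g^2\bigr)$ has a bounded \emph{negative} part near the origin (since $f\to 0$ forces $\beta_1(f^2-1)\to-\beta_1$), and your device of lowering the centrifugal coefficient from $(1-e_1a)^2N^2/r^2$ to $(1-\vep)^2N^2/r^2$ and letting the surplus $O(r^{-2})$ term absorb that bounded defect is exactly the observation needed to make the differential inequality hold on $(0,r_\vep)$ without any implicit bootstrapping on the rate of $f$. Both routes are sound; yours is marginally more self-contained on this point.
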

\begin{proof}
Take the comparison function $A_\vep(r)=Cr^{2(1-\vep)}$, where $C>0$ is a constant to be chosen later. From the equation \eqref{1.9} and the conditions $f(0)=g(0)=0$, we see that there is an $r_\vep>0$ small so that
\be\label{3.2}
(a-A_\vep)''-\frac{(a-A_\vep)'}{r}\geq (2f^2e_1^2+2g^2e_2^2)(a-A_\vep),~~r\in(0, r_\vep).
\ee
For such fixed $r_\vep$, let the constant $C>0$ be large enough to make $a(r_\vep)-A_\vep(r_\vep)\leq 0$. In view of this and the boundary condition $(a-A_\vep)(r)\to 0$ as $r\to 0$, we obtain by applying the maximum principle in \eqref{3.2} the result $a(r)\leq A_\vep(r)$ for $r\in(0, r_\vep)$, which establish the bound
\be\label{3.3}
0<a(r)\leq C r^{2(1-\vep)},~~r\in(0, r_\vep),
\ee
resulting in the estimate for $a$ in \eqref{3.1}.

Next, set $F_\vep(r)=Cr^{N(1-\vep)}$, then using $f(0)=g(0)=0$ and \eqref{1.11}, we have
\be\label{3.5}
(f-F_\vep)''+\frac{(f-F_\vep)'}{r}\geq\frac{(1-e_1a)^2N^2}{r^2}(f-F_\vep),~~r\in(0, r_\vep),
\ee
where $r_\vep$ is chosen to be sufficiently small. Taking $C$ large enough such that $f(r_\vep)-F_\vep(r_\vep)\leq 0$. By virtue of this, $f(0)=F_\vep(0)=0$, the differential equality \eqref{3.5}, and the maximum principle, we have $f(r)\leq F_\vep(r)$ for all $r\in(0, r_\vep)$. That is
\be\label{3.6}
0<f(r)\leq C r^{N(1-\vep)},~~r\in(0, r_\vep),
\ee
which gives rise to the asymptotic estimate for $f(r)$ near $r=0$ stated in \eqref{3.1}.

Now consider the estimate for $g(r)$ at $r=0$. Take $G_\vep(r)=Cr^{M(1-\vep)}$, then \eqref{1.12} and $f(0)=g(0)=0$ lead to
\be\label{3.7}
(g-G_\vep)''+\frac{(g-G_\vep)'}{r}\geq\frac{(e_2Na-M)^2}{r^2}(g-G_\vep),
\ee
where, again, $r_\vep>0$ is sufficiently small. Choose $C>0$ large enough to make $g(r_\vep)-G_\vep(r_\vep)\leq 0$. Using this and that $g-G_\vep$ vanishes at origin in \eqref{3.7}, we obtain $g(r)\leq G_\vep(r)$ for $r\in(0, r_\vep)$ with an application of the maximum principle. This results in the bound
\be\label{3.8}
0<g(r)\leq C r^{M(1-\vep)},~~r\in(0, r_\vep).
\ee
So the decay estimate for $g(r)$ near $r=0$ is verified.

Finally, let $B(r)=b(r)-C_1$, we can rewrite the equation \eqref{1.10} as
\be\label{3.9}
(r B')'=2r(B+C_1)(e_1^2f^2+e_2^2g^2)-2e_2rg^2.
\ee
Since $B(r)\to 0$ as $r\to 0$, we get
\be\label{3.10}
(r B')'\sim 2C_1e_1^2rf^2+(2C_1e_2^2-2e_2)rg^2,~~r\to 0.
\ee
Then \eqref{3.10} and the estimates for $f(r)$ and $g(r)$ near $r=0$ give us
\be\label{3.11}
(r B'(r))'=O(r^{2N(1-\vep)+1}),~~r\to 0.
\ee

Integrating \eqref{3.11} on $(0, r)$, we obtain $r B'(r)=O(r^{2N(1-\vep)+2})$ as $r\to 0$, or
\be\label{3.12}
B'(r)=O(r^{2N(1-\vep)+1}),~~r\to 0.
\ee
Again, integrating \eqref{3.12} on $(0, r)$, we have
\be\label{3.13}
B(r)=O(r^{2N(1-\vep)+2}),~~r\to 0.
\ee
\end{proof}
We now study the decay estimates of the solution in the limit $r\to\infty$.

\begin{lemma}\label{le.4}
The solution quartet $(a,b,f,g)$ of \eqref{1.9}--\eqref{1.12} satisfies the following asymptotic estimates
\bea
a(r)&=&1/e_1+O(\e^{-\sqrt{2}e_1(1-\vep)r}),~~b(r)=O(\e^{-\sqrt{2}e_1(1-\vep)r}),\nn\\
f(r)&=&O(\e^{-\sqrt{2}\beta_1(1-\vep)r}),~~g(r)=O(\e^{-\sqrt{\omega^2-\alpha+\beta'}(1-\vep)r}),\label{3.14}
\eea
for $r\to \infty$, where $\vep$ is an arbitrary number lying in the interval $(0, 1)$.
\end{lemma}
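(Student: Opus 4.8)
The plan is to establish the four exponential decay rates in \eqref{3.14} by constructing suitable supersolution comparison functions of the form $C\e^{-\mu(1-\vep)r}$ and applying the maximum principle, exactly in the spirit of the proof of Lemma~\ref{le.3}, but now linearizing each equation around its boundary value at infinity rather than at the origin. First I would treat $f(r)$. Writing $f=1+(f-1)$ and using the already-established facts $f(\infty)=1$, $g(\infty)=0$, $a(\infty)=1/e_1$, $b(\infty)=0$, the dominant term on the right of \eqref{1.11} is $\beta_1(f^2-1)\sim 2\beta_1(f-1)$, so that for large $r$ the function $u=1-f$ satisfies an inequality of the form $\frac{1}{r}(ru')'\leq 2\beta_1(1-\vep)^2 u$. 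Taking the comparison function $\e^{-\sqrt{2\beta_1}(1-\vep)r}$ (noting the square-root gradient term $u''+u'/r$ contributes nothing harmful to leading order in the radial Laplacian) and choosing the constant $C$ large enough to dominate on a sphere $r=R$, the maximum principle yields the stated bound on $f$.

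I would then handle $a$ and $b$ together, since their leading linearized behaviour is controlled by the same rate $\sqrt{2}e_1$. For $a$, set $w=1/e_1-a$; using $f\to 1$ in \eqref{1.9} the leading term is $2f^2e_1(e_1a-1)\sim -2e_1^2 w$, giving $w''-w'/r\leq 2e_1^2(1-\vep)^2 w$ for large $r$, which the barrier $\e^{-\sqrt{2}e_1(1-\vep)r}$ dominates. For $b$, equation \eqref{1.10} linearizes (with $g\to0$, $f\to1$) to $\frac{1}{r}(rb')'\sim 2e_1^2 f^2 b\sim 2e_1^2 b$, so the same rate $\sqrt{2}e_1$ applies and the same barrier works. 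Finally, for $g$, I would linearize \eqref{1.12}: as $r\to\infty$ the bracket tends to $\omega^2(1-e_2b)^2+\beta_2g^2-\alpha+\beta'f^2\to\omega^2-\alpha+\beta'$, which is positive precisely by the standing hypothesis $\omega^2>\alpha-\beta'$. Thus $g$ satisfies $\frac{1}{r}(rg')'\gtrsim(\omega^2-\alpha+\beta')(1-\vep)^2 g$ for large $r$, and the barrier $\e^{-\sqrt{\omega^2-\alpha+\beta'}(1-\vep)r}$ furnishes the claimed estimate.

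In each case the mechanics are identical: identify the limiting coefficient $\mu^2$ of the linear term, absorb the lower-order corrections into the factor $(1-\vep)^2<1$ so that for $r$ beyond some $R_\vep$ the genuine supersolution inequality holds, fix $C$ so the barrier exceeds the solution at $r=R_\vep$ (the decay at infinity handles the outer ``boundary''), and invoke the maximum principle on $(R_\vep,\infty)$. The one point requiring care is the order in which the estimates are proved, because the linearizations are coupled: the bound on $a$ is needed to justify dropping the $\frac{(e_2Na-M)^2}{r^2}$ and $(1-e_1a)^2N^2/r^2$ terms (which decay only polynomially but are multiplied by exponentially small $g^2$ and $f^2$ respectively), and the bound on $b$ feeds into the $g$-equation through $(1-e_2b)^2$. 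I would therefore prove $f$ and $a$ first, then $b$, then $g$.

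The main obstacle I anticipate is making the barrier argument fully rigorous when the radial operator $\frac{1}{r}(r\,\cdot')'$ is compared against a pure exponential: $\e^{-\mu r}$ is not an exact solution of the radial equation because of the first-order term $\pm u'/r$, which at rate $\mu$ contributes a term of relative size $\mu/r$. This is harmless since it is lower order and can be absorbed into the slack created by replacing $\mu$ by $\mu(1-\vep)$, but it must be spelled out that for $r$ large enough the inequality survives; alternatively one fixes this cleanly by multiplying the barrier by a mild polynomial factor or by noting $\e^{-\mu(1-\vep)r}$ is a strict supersolution of the exact radial equation for $r$ beyond a threshold depending on $\vep$. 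Handling this uniformly across all four equations, together with verifying that the nonlinear and cross-coupling terms are genuinely subordinate once the linearized rate is isolated, is where the real work lies.
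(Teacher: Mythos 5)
Your proposal follows essentially the same route as the paper's proof: comparison functions $C\e^{-\mu(1-\vep)r}$ obtained by linearizing each equation about the limiting values at infinity, a supersolution inequality valid for $r\geq r_\vep$ with the first-order term $u'/r$ absorbed into the $(1-\vep)$ slack, and the maximum principle on $(r_\vep,\infty)$ after fixing $C$ to dominate at $r=r_\vep$; your extra care about the order of the four estimates and the coupling terms only makes the argument cleaner. One remark: the rate $\sqrt{2\beta_1}$ you derive for $f$ is what the linearization $F''=2\beta_1 F$ actually gives, whereas the lemma (and the paper's own proof, which writes the decaying solution of that equation as $C\e^{-\sqrt{2}\beta_1 r}$) states the rate $\sqrt{2}\,\beta_1$; the two coincide only for $\beta_1=1$, and the stated exponent appears to be a typo for $\sqrt{2\beta_1}$, so your version is the correct one.
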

\begin{proof}
Introduce the comparison function
\be\label{3.15}
a_\vep(r)=C\e^{-\sqrt{2}e_1(1-\vep)r},~~r>0,~~\vep\in(0,1).
\ee
In view of $f(\infty)=1$ and $g(\infty)=0$, for any fixed $\vep>0$ and $C\geq 1$ given in \eqref{3.15}, there is some large $r_\vep>1$ such that
\bea\label{3.19}
\left(a-\frac{1}{e_1}+a_\vep\right)''-\frac{(a-\frac{1}{e_1}+a_\vep)'}{r}\leq(2f^2e_1^2+2g^2e_2^2)\left(a-\frac{1}{e_1}+a_\vep\right),~~r\geq r_\vep.
\eea
For such fixed $r_\vep$, we can choose constant $C>0$ in \eqref{3.15} large enough to make $a(r_\vep)-\frac{1}{e_1}+a_\vep(r_\vep)\geq 0$. Using the boundary condition $a(\infty)-\frac{1}{e_1}+a_\vep(\infty)=0$ and applying the maximum principle to the differential inequality \eqref{3.19}, we have $a(r)-\frac{1}{e_1}+a_\vep(r)\geq 0$ for $r\geq r_\vep$. Thus, we see that the solution $a$ of \eqref{1.9} verifies $a(r)=1/e_1+O(\e^{-\sqrt{2}e_1(1-\vep)r})$ as $r\to\infty$.

Next, we consider the asymptotic of $b(r)$ at infinity. Linearize the equation \eqref{1.10} at $r=\infty$ gives us $b''=2e_1^2b$ whose characteristic roots are $\sqrt{2}e_1$ and $-\sqrt{2}e_1$. Thus, we may use the method in the proof of the estimate for $a(r)$ at $\infty$ to get the estimate for $b(r)$ claimed in \eqref{3.14}.

For $f$ we rewrite the equation \eqref{1.11} as
\be\label{3.20}
(f-1)''+\frac{(f-1)'}{r}=\beta_1f(f+1)(f-1)+f\left(\frac{(1-e_1a)^2N^2}{r^2}+e_1^2\omega^2b^2+\beta'g^2\right).
\ee
The structure of \eqref{3.20} leads us to its linearized form around $r=\infty$,
\be\label{3.21}
F''=2\beta_1 F,
\ee
whose solution is $F(r)=C\e^{-\sqrt{2}\beta_1r}$ near $r=0$. This suggests that we may choose the comparison function
\be\label{3.22}
f_\vep(r)=C\e^{-\sqrt{2}\beta_1(1-\vep)r},~~r>0,~~\vep\in(0, 1),
\ee
which produces that
\be\label{3.23}
f_\vep''+\frac{f_\vep'}{r}=2\beta_1^2(1-\vep)^2 f_\vep-\frac{\sqrt{2}\beta_1(1-\vep)}{r}f_\vep.
\ee
Combining \eqref{3.20} with \eqref{3.23}, when $r_\vep>0$ is sufficiently large, we have
\be\label{3.24}
(f-1+f_\vep)''+\frac{(f-1+f_\vep)'}{r}\leq\beta_1f(f+1)(f-1+f_\vep),~~r\geq r_\vep,
\ee
where we used $b(\infty)=0$, $f(\infty)=1$ and $g(\infty)=0$. Choose $C>0$ in \eqref{3.22} large enough such that $(f-1+F_\vep)(r_\vep)\geq 0$. Note the boundary condition $(f-1+F_\vep)(\infty)=0$ and applying the maximum principle to the inequality \eqref{3.24}, we have $f(r)-1+F_\vep(r)\geq 0$ for $r\geq r_\vep$. So the decay estimate for $f$ near infinity stated in \eqref{3.14} is established.

Finally, we rewrite the equation \eqref{1.12} as
\be\label{3.25}
g''+\frac{g'}{r}=g\left(\frac{(e_2Na-M)^2}{r^2}+\omega^2(1-e_2b)^2+\beta_2g^2-\alpha+\beta'f^2\right).
\ee
Since $b(\infty)=0$ and $f(\infty)=1$, and the characteristic roots of $g''=(\omega^2-\alpha+\beta')g$ are $\sqrt{\omega^2-\alpha+\beta'}$, we see that the solution $g$ of \eqref{3.25} satisfies $g(r)=O(\e^{-\sqrt{\omega^2-\alpha+\beta'}(1-\vep)r})$ as $r\to\infty$.
\end{proof}

\medskip



\end{document}